\documentclass[12pt,a4paper,oneside]{article}
\usepackage[affil-it]{authblk}
\usepackage[centertags]{amsmath}
\usepackage{amssymb,mathtools,mathrsfs}
\usepackage{amsthm}
\usepackage{enumerate}
\usepackage{stmaryrd}
\usepackage{dsfont}
\usepackage{url}
\usepackage{graphicx}
\usepackage{color}
\usepackage{epsfig}
\usepackage{subcaption}
\usepackage{geometry}
\definecolor{dg}{rgb}{0,0.5,0}		
\definecolor{orange}{rgb}{1.0, 0.5, 0.0}

\usepackage{hyphenat}
\usepackage[english]{babel}
\usepackage[T1]{fontenc}

\theoremstyle{theorem}
\newtheorem{Thm}{Theorem}

\newtheorem{Prop}[Thm]{Proposition}

\theoremstyle{definition}

\newtheorem{Rem}[Thm]{Remark}

\newcommand{\ii}{\textup{i}}
\newcommand{\ee}{\textup{e}}

\newcommand{\jdn}{\textbf{1}}

\DeclareMathOperator{\diag}{diag}

\DeclareMathOperator{\Tr}{Tr}

\DeclareMathOperator{\RE}{Re}

\renewcommand{\bar}{\overline}		

\begin{document}


\title{Comment on `Quantum Monge--Kantorovich Problem and Transport Distance between Density Matrices'}
\author{Tomasz Miller${}^{1}$\thanks{tomasz.miller@uj.edu.pl}}

\affil{\small ${}^1$ Copernicus Center for Interdisciplinary Studies, Jagiellonian University,
\\Szczepa\'nska 1/5, 31-011 Krak\'ow, Poland}
\date{\today}


\maketitle

\begin{abstract}
Friedland et al. [\emph{Phys. Rev. Lett.} \textbf{129}, 110402 (2022)] proposed and studied a quantum analogue of the $p$-Wasserstein distance based on quantum cost matrices and quantum couplings. They conjectured that, despite being only a semidistance in general, this quantity is a true distance for a particular quantum cost matrix and for cost matrices in a small neighborhood of it. We disprove these conjectures by exhibiting an explicit family of triples of states for which the triangle inequality fails.
\end{abstract}



\section{Introduction}

In \cite{QOT} the following construction was put forward. For any $n \times n$ distance matrix $(E_{ij})$ define its associated \emph{quantum cost matrix} $C_E$ given by (cf. \cite[(15)]{QOT})
\begin{align}
\label{costmatrix}
C_E := \sum_{i<j}E_{ij}|i \wedge j\rangle \langle i \wedge j|,
\end{align}
where $| i \wedge j \rangle := \tfrac{1}{\sqrt{2}} (|ij\rangle - |ji\rangle)$ with $\{|i\rangle\}_{i=1}^n$ being some fixed computational basis, and then for any chosen $p \geq 1$ consider the quantum analogue of the $p$-Wasserstein distance (cf. \cite[(16)]{QOT})
\begin{align}
\label{Wdef}
W_{C_E,p}(\rho_1,\rho_2) := \min_{\rho_{12} \in \Gamma(\rho_1,\rho_2)} \left[ \Tr C^p_E\rho_{12} \right]^{1/p} = \min_{\rho_{12} \in \Gamma(\rho_1,\rho_2)} \left[ \sum_{i<j} E^p_{ij}\langle i \wedge j | \rho_{12} | i \wedge j \rangle \right]^{1/p}
\end{align}
for any $n \times n$ density matrices $\rho_1,\rho_2$, where the minimum runs over the set $\Gamma(\rho_1,\rho_2)$ of all their quantum couplings, i.e., $n^2 \times n^2$ density matrices $\rho_{12}$ whose partial traces are $\Tr_1 \rho_{12} = \rho_2$ and $\Tr_2 \rho_{12} = \rho_1$. It has been shown \cite[Proposition 4]{QOT} that $W_{C_E,p}$ is a semidistance, but in general it violates the triangle inequality \cite[Appendix L]{QOT}. On the other hand, there are some intriguing special cases where $W_{C_E,p}$ \emph{does} satisfy the triangle inequality and thus is a true distance. These include the $n=2$ case \cite[Theorem 3]{QOT} and the case where $W_{C_E,p}$ is restricted to pure states \cite{Triangle}, in both cases under the necessary assumption that $p \geq 2$.

Based on numerical investigations, it has been conjectured \cite[Conjecture I]{QOT} that if the quantum cost matrix is chosen to be the projector onto the antisymmetric subspace $C^Q := \sum_{i<j}|i \wedge j\rangle \langle i \wedge j| = \tfrac{1}{2}(\jdn_{n^2}-S)$, where $S$ is the SWAP operator, then the associated quantum $2$-Wasserstein semidistance
\begin{align}
\label{Wdef2}
W_{C^Q,2}(\rho_1,\rho_2) = \min_{\rho_{12} \in \Gamma(\rho_1,\rho_2)} \sqrt{ \sum_{i<j} \langle i \wedge j | \rho_{12} | i \wedge j \rangle }
\end{align}
satisfies the triangle inequality for \emph{any} dimension $n$. What is more, subsequent Conjecture II claimed that this property remains true when the quantum cost matrix $C^Q$ is slightly perturbed. In this note, however, we disprove both conjectures by presenting analytical counterexamples---a broad family of triples of states (\ref{ex}) for which the triangle inequality is violated.

\section{Results}

The counterexamples involve certain \emph{diagonal} density matrices, and in order to study them we shall need the following general formula expressing $W_{C_E,p}$ between such classical states in terms of a minimum over classical couplings\footnote{It generalizes the result obtained in the final part of the proof of \cite[Theorem 4.3]{QOT2} and it also sheds some light on \cite[Theorem C.1]{QOT2}.}.
\begin{Prop}
Let $P = (p_1,\ldots,p_n)$ and $Q = (q_1,\ldots,q_n)$ be probability vectors and let $(E_{ij})$ be a real symmetric $n \times n$ matrix. Then
\begin{align}
\label{wzor}
W_{C_E,p}(\diag(P),\diag(Q)) = \min_{\gamma \in \Gamma_{\textup{cl}}(P,Q)} \left[ \tfrac{1}{2} \sum_{i<j}E_{ij}^p (\sqrt{\gamma_{ij}}-\sqrt{\gamma_{ji}})^2 \right]^{1/p},
\end{align}
where the minimum runs over all \emph{classical} couplings of $P$ and $Q$.
\end{Prop}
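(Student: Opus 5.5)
The plan is to reduce the quantum minimisation in \eqref{Wdef} to a classical one by two matching inequalities. First I rewrite the cost of an arbitrary coupling in coordinates. For $i<j$ a direct expansion gives
\begin{align*}
\langle i\wedge j|\rho_{12}|i\wedge j\rangle = \tfrac12\bigl(\rho_{ij,ij}+\rho_{ji,ji}\bigr) - \RE \rho_{ij,ji},
\end{align*}
where $\rho_{ab,cd}:=\langle ab|\rho_{12}|cd\rangle$. So the cost depends only on the diagonal entries $\gamma_{ij}:=\rho_{ij,ij}$ and on the ``swap'' off-diagonal entries $\rho_{ij,ji}$. I take the $E_{ij}$ to be nonnegative, as they are for the distance matrices the paper has in mind; otherwise $E_{ij}^p$ need not make sense for non-integer $p$. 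Under this assumption all weights $E_{ij}^p$ are $\geq 0$.

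\emph{Lower bound.} Let $\rho_{12}\in\Gamma(\diag(P),\diag(Q))$ and set $\gamma_{ij}:=\langle ij|\rho_{12}|ij\rangle$. Then $\gamma$ is entrywise nonnegative with $\sum_{ij}\gamma_{ij}=1$. Taking diagonal entries of the partial traces gives $\sum_j\gamma_{ij}=\langle i|\Tr_2\rho_{12}|i\rangle=p_i$ and $\sum_i\gamma_{ij}=q_j$. Hence $\gamma\in\Gamma_{\textup{cl}}(P,Q)$. Positivity of the $2\times2$ principal submatrix of $\rho_{12}$ on $\mathrm{span}\{|ij\rangle,|ji\rangle\}$ gives $|\rho_{ij,ji}|\leq\sqrt{\gamma_{ij}\gamma_{ji}}$. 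Therefore each term satisfies $\langle i\wedge j|\rho_{12}|i\wedge j\rangle\geq\tfrac12(\gamma_{ij}+\gamma_{ji})-\sqrt{\gamma_{ij}\gamma_{ji}}=\tfrac12(\sqrt{\gamma_{ij}}-\sqrt{\gamma_{ji}})^2$. Multiplying by $E_{ij}^p\geq0$ and summing, $\Tr C_E^p\rho_{12}$ is at least the bracket in \eqref{wzor} for this particular $\gamma$. So the left side of \eqref{wzor} is $\geq$ the right side.

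\emph{Upper bound (construction).} Given any $\gamma\in\Gamma_{\textup{cl}}(P,Q)$, I define
\begin{align*}
\rho_{12}:=\sum_{i}\gamma_{ii}|ii\rangle\langle ii| + \sum_{i<j}|v_{ij}\rangle\langle v_{ij}|, \qquad |v_{ij}\rangle:=\sqrt{\gamma_{ij}}\,|ij\rangle+\sqrt{\gamma_{ji}}\,|ji\rangle.
\end{align*}
It is a sum of positive rank-one operators with trace $\sum_{ij}\gamma_{ij}=1$, so it is a density matrix. The cross terms $|ij\rangle\langle ji|$ with $i\neq j$ vanish under either partial trace, because they produce factors $\langle j|i\rangle=0$. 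Hence $\Tr_2\rho_{12}=\sum_{ij}\gamma_{ij}|i\rangle\langle i|=\diag(P)$, and similarly $\Tr_1\rho_{12}=\diag(Q)$. For this coupling $\rho_{ij,ji}=\sqrt{\gamma_{ij}\gamma_{ji}}$ is real and saturates the bound above, so its cost equals exactly the classical expression for $\gamma$.

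The two bounds together give \eqref{wzor}. That the minimum is attained on both sides follows from continuity of the objective on the compact sets $\Gamma(\diag(P),\diag(Q))$ and $\Gamma_{\textup{cl}}(P,Q)$. The only real obstacle is checking that the constructed $\rho_{12}$ has the right marginals, i.e.\ the cancellation of the swap coherences under partial trace. The only care needed is the convention that $\Tr_2\rho_{12}=\rho_1$, so that the first index of $\gamma$ carries the marginal $P$.
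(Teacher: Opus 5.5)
Your proof is correct, and its two ingredients are exactly those of the paper's Step~3 and its attainment argument. The first is the bound $|\langle ij|\rho_{12}|ji\rangle|\le\sqrt{\gamma_{ij}\gamma_{ji}}$ from positivity of a $2\times 2$ principal submatrix. The second is the coupling built from the rank-one blocks $|v_{ij}\rangle\langle v_{ij}|$: choosing $z_{ij}:=\sqrt{\gamma_{ij}\gamma_{ji}}$ in (\ref{step2}) produces precisely your $\rho_{12}$.

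The route differs in that the paper first averages an arbitrary coupling over the phase unitaries $U_\theta\otimes U_\theta$ (Steps~1--2). It checks that the average is still a coupling with the same cost and is block-diagonal, and only then evaluates and bounds the cost on such couplings. You notice that this reduction is not needed, for three reasons:
\begin{enumerate}
\item For \emph{any} coupling, the cost involves only the entries $\langle ij|\rho_{12}|ij\rangle$, $\langle ji|\rho_{12}|ji\rangle$ and $\langle ij|\rho_{12}|ji\rangle$.
\item The diagonal of any coupling is already a classical coupling of $P$ and $Q$, since only the diagonal entries of the marginals enter.
\item The minor inequality holds for every positive semidefinite matrix.
\end{enumerate}
Your argument is therefore shorter and avoids the averaging integral. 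The paper's twirl additionally shows that optimal couplings can be taken phase-invariant (block-diagonal). That makes the collapse to a classical coupling plus swap ``interference'' coherences conceptually transparent.

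You are also more explicit than the paper on two points. You verify positivity and the marginals of the constructed coupling. You also flag that the lower bound requires $E_{ij}^p\ge 0$; the paper's Step~3 uses this tacitly, although the statement only says ``real symmetric''.
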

\begin{proof}
\textbf{Step 1.} For any $\theta := (\theta_1,\ldots,\theta_n) \in [0,2\pi)^n$ define the diagonal unitary operator $U_\theta := \sum_{k=1}^n \ee^{\ii \theta_k} |k\rangle \langle k|$. Denote $\rho := \diag(P)$, $\sigma := \diag(Q)$ and for any $\omega \in \Gamma(\rho,\sigma)$ consider $\omega_\theta := (U_\theta^+ \otimes U_\theta^+ )\omega (U_\theta \otimes U_\theta)$ as well as its average over all $\theta$
\begin{align}
\widetilde{\omega} := \frac{1}{(2\pi)^n}\int_{[0,2\pi)^n} \omega_\theta d\theta.
\end{align}
Observe that $\omega_\theta \in \Gamma(\rho,\sigma)$ for any $\theta$, and hence also $\widetilde{\omega} \in \Gamma(\rho,\sigma)$. Indeed, one has that
\begin{align}
\langle ij | \omega_\theta  | kl \rangle = \ee^{\ii (\theta_k + \theta_l - \theta_i - \theta_j)} \langle ij | \omega | kl \rangle,
\end{align}
and hence the partial traces of $\omega_\theta$ have the matrix elements
\begin{align}
& \langle i | \Tr_2 \omega_\theta | k \rangle = \sum_j \langle ij | \omega_\theta | kj \rangle = \ee^{\ii (\theta_k - \theta_i)} \sum_j \langle ij | \omega | kj \rangle = \ee^{\ii (\theta_k - \theta_i)} \langle i | \rho | k \rangle = \langle i | \rho | k \rangle
\end{align}
and, similarly,
\begin{align}
& \langle j | \Tr_1 \omega_\theta | l \rangle = \sum_i \langle ij | \omega_\theta | il \rangle = \ee^{\ii (\theta_l - \theta_j)} \sum_i \langle ij | \omega | il \rangle = \ee^{\ii (\theta_l - \theta_j)} \langle j | \sigma | l \rangle = \langle j | \sigma | l \rangle,
\end{align}
where we have used the diagonality of both $\rho$ and $\sigma$.

What is more, the transport cost does not change due to $U_\theta$-averaging,
\begin{align}
\sum_{i<j} E^p_{ij}\langle i \wedge j | \widetilde{\omega} | i \wedge j \rangle = \sum_{i<j} E^p_{ij}\langle i \wedge j | \omega | i \wedge j \rangle
\end{align}
simply because $(U_\theta \otimes U_\theta)| i \wedge j \rangle = \ee^{\ii (\theta_i + \theta_j)} | i \wedge j \rangle$ for any $i,j$ and any $\theta$.

In light of the above, when calculating $W_{C_E,p}$ we can minimize only over $U_\theta$-averaged couplings $\widetilde{\omega}$.

\textbf{Step 2.} We claim that every $U_\theta$-averaged coupling $\widetilde{\omega}$ has a block-diagonal matrix in the computational basis. Concretely, notice that
\begin{align}
\langle ij | \widetilde{\omega} | kl \rangle = \frac{\langle ij | \omega | kl \rangle}{(2\pi)^n}\int_{[0,2\pi)^n} \ee^{\ii (\theta_k + \theta_l - \theta_i - \theta_j)} d\theta,
\end{align}
which can be nonzero only if $\{i,j\} = \{k,l\}$ as (multi)sets. In other words, the only potentially nonzero matrix elements of $\widetilde{\omega}$ are
\begin{align}
\nonumber
& \langle ii | \widetilde{\omega} | ii \rangle =: \gamma_{ii}, \quad \langle ij | \widetilde{\omega} | ij \rangle =: \gamma_{ij}, \quad \langle ji | \widetilde{\omega} | ji \rangle =: \gamma_{ji},
\\
\label{step2}
& \langle ij | \widetilde{\omega} | ji \rangle =: z_{ij} \quad \textnormal{and} \quad  \langle ji | \widetilde{\omega} | ij \rangle = \bar{z}_{ij},
\end{align}
for any $1 \leq i < j \leq n$, where in the last line we have used the hermiticity of $\widetilde{\omega}$, which also implies that $(\gamma_{ij})$ is a real matrix. What is more, since $\widetilde{\omega}$ is positive semi-definite, we have $\gamma_{ii},\gamma_{ij},\gamma_{ji} \geq 0$ and $|z_{ij}|^2 \leq \gamma_{ij}\gamma_{ji}$ for all $i,j$ such that $i < j$.

Moreover, $(\gamma_{ij})$ is actually a classical coupling of $P$ and $Q$ in the sense that
\begin{align*}
\sum_j \gamma_{ij} = \sum_j \langle ij | \widetilde{\omega} | ij \rangle = \langle i | \rho | i \rangle = p_i \quad \textnormal{and} \quad \sum_i \gamma_{ij} = \sum_i \langle ij | \widetilde{\omega} | ij \rangle = \langle j | \sigma | j \rangle = q_j. 
\end{align*}

\textbf{Step 3.} For any $U_\theta$-averaged coupling $\widetilde{\omega}$ the transport cost can be now easily calculated and then bounded from below as
\begin{align}
\label{step3}
& \sum_{i<j} E^p_{ij}\langle i \wedge j | \widetilde{\omega} | i \wedge j \rangle = \tfrac{1}{2}\sum_{i<j} E^p_{ij} \left( \gamma_{ij} + \gamma_{ji} - 2\RE z_{ij} \right)
\\
\nonumber
& \geq \tfrac{1}{2}\sum_{i<j} E^p_{ij} \left( \gamma_{ij} + \gamma_{ji} - 2\sqrt{\gamma_{ij}\gamma_{ji}} \right) = \tfrac{1}{2}\sum_{i<j} E^p_{ij} \left( \sqrt{\gamma_{ij}} - \sqrt{\gamma_{ji}} \right)^2
\end{align}

Bearing in mind all of the above, we obtain that 
\begin{align*}
W_{C_E,p}(\rho,\sigma) & = \min_{\widetilde{\omega}} \left[ \sum_{i<j} E^p_{ij}\langle i \wedge j | \widetilde{\omega} | i \wedge j \rangle \right]^{1/p} \geq \min_{\gamma \in \Gamma_{\textup{cl}}(P,Q)} \left[ \tfrac{1}{2} \sum_{i<j}E_{ij}^p (\sqrt{\gamma_{ij}}-\sqrt{\gamma_{ji}})^2 \right]^{1/p},
\end{align*}
but the rightmost minimum is always attained --- indeed, for any $\gamma \in \Gamma_{\textup{cl}}(P,Q)$ simply \emph{define} $\widetilde{\omega}$ via (\ref{step2}) with $z_{ij} := \sqrt{\gamma_{ij}\gamma_{ji}}$ for any $i,j$ such that $i<j$.
\end{proof}

\begin{Rem}
\cite[Proposition 5]{QOT} states that the quantum transport between classical (i.e., diagonal) states is always cheaper than its classical counterpart. The above result makes this statement more precise. Concretely, observe that
\begin{align*}
W_{C_E,p}(\diag(P),\diag(Q)) & = \min_{\gamma \in \Gamma_{\textup{cl}}(P,Q)} \left[ \tfrac{1}{2} \sum_{i<j}E_{ij}^p (\sqrt{\gamma_{ij}}-\sqrt{\gamma_{ji}})^2 \right]^{1/p}
\\
& = \min_{\gamma \in \Gamma_{\textup{cl}}(P,Q)} \left[ \tfrac{1}{2} \sum_{i<j}E_{ij}^p \left( \gamma_{ij} + \gamma_{ji} - 2\sqrt{\gamma_{ij}\gamma_{ji}} \right) \right]^{1/p}
\\
& \leq \min_{\gamma \in \Gamma_{\textup{cl}}(P,Q)} \left[ \tfrac{1}{2} \sum_{i<j}E_{ij}^p \left( \gamma_{ij} + \gamma_{ji} \right) \right]^{1/p} = \min_{\gamma \in \Gamma_{\textup{cl}}(P,Q)} \left[ \tfrac{1}{2} \sum_{ij} E_{ij}^p \gamma_{ij} \right]^{1/p},
\end{align*}
where the rightmost expression is the classical $p$-Wasserstein distance\footnote{Provided $E$ is a distance matrix.} up to the $2^{-1/p}$ factor, which perhaps should have been taken into account already in \cite{QOT} in the very definition of $W_{C_E,p}$. Concretely, definition (\ref{Wdef}) should rather involve $\min_{\rho_{12} \in \Gamma(\rho_1,\rho_2)} \left[ 2 \Tr C^p_E\rho_{12} \right]^{1/p}$, so that when considering classical states and restricting the minimization to the classical couplings $\omega_\textup{cl} = \sum_{ij}\gamma_{ij}|ij\rangle \langle ij|$, for cost matrices (\ref{costmatrix}) one would indeed recover the classical $p$-Wasserstein distance.

Bearing that in mind, we can thus clearly see \emph{why} the quantum transport is cheaper --- it is precisely due to the off-diagonal ``interference'' terms $\sqrt{\gamma_{ij}\gamma_{ji}}$.
\end{Rem}

\begin{Rem}
Formula (\ref{wzor}) for $n=2$, $p=2$ and $E=\left[ \begin{smallmatrix} 0 & 1 \\ 1 & 0 \end{smallmatrix} \right]$ offers a way of rederiving \cite[(11)]{QOT} (or, conversely, the latter can serve as a sanity check of (\ref{wzor}) in the $n=2$ case).

Indeed, let us take $\rho = \diag(r,1-r)$ and $\sigma = \diag(s,1-s)$. Then their classical couplings form the one-parameter family
\begin{align*}
\gamma = \begin{bmatrix} \lambda & p-\lambda \\ q-\lambda & 1-p-q+\lambda \end{bmatrix}, \qquad \max\{0, p+q-1\} \leq \lambda \leq \min\{p,q\}
\end{align*}
and so (\ref{wzor}) yields
\begin{align}
\label{sanity}
W_{C_E,2}(\rho,\sigma) = \tfrac{1}{\sqrt{2}}  \min_{\lambda} \left| \sqrt{p-\lambda}-\sqrt{q-\lambda} \right| = \left\{ \begin{array}{ll} \tfrac{1}{\sqrt{2}} \left| \sqrt{p}-\sqrt{q} \right| & \textnormal{if } p+q \leq 1 
\\
\tfrac{1}{\sqrt{2}} \left| \sqrt{1-p}-\sqrt{1-q} \right| & \textnormal{if } p+q > 1 \end{array} \right. ,
\end{align}
where we have used the fact that the map $\lambda \mapsto \left| \sqrt{p-\lambda}-\sqrt{q-\lambda} \right|$ is increasing on the considered interval. But the rightmost expression in (\ref{sanity}) can be shown to be equal to $\tfrac{1}{\sqrt{2}} \max\{ \left| \sqrt{p}-\sqrt{q} \right|, \left| \sqrt{1-p}-\sqrt{1-q} \right|\}$ (i.e., to \cite[(11)]{QOT}), by proving the equivalence
\begin{align*}
p+q \leq 1 \quad \Longleftrightarrow \quad \left| \sqrt{p}-\sqrt{q} \right| \geq \left| \sqrt{1-p}-\sqrt{1-q} \right|
\end{align*}
for any $p,q \in [0,1]$, what can be done by tedious but straightforward calculations.
\end{Rem}

With formula (\ref{wzor}) at hand, we are finally ready to present a large family of triples of diagonal states for which $W_{C_E,2}$ violates the triangle inequality. Concretely, let
\begin{align}
\label{ex}
& \rho := \diag(1-s,s,0) \qquad \sigma := \diag(s,1-s-t,t) \qquad \tau := \diag(t,1-s-t,s),
\\
\nonumber
& \textnormal{where the parameter space is} \quad \Delta := \{ (s,t) \ | \ 0 \leq t \leq s \leq \tfrac{1}{2}, \ 2s+t \leq 1 \}.
\end{align}
Denoting $W_{C_E,2}$ simply by $W$, we shall prove first that
\begin{align}
\label{dist1}
& W(\rho,\sigma) = \sqrt{\tfrac{1}{2} -\sqrt{s(1-s-t)}},
\\
\label{dist2}
& W(\rho,\tau) = \sqrt{\tfrac{1}{2}(1+t-s) - \sqrt{t(1-2s)}},
\\
\label{dist3}
& W(\sigma,\tau) = \tfrac{1}{\sqrt{2}}\left(\sqrt{s}-\sqrt{t}\right)
\end{align}
and then demonstrate that $W(\rho,\tau) > W(\rho,\sigma) + W(\sigma,\tau)$ on the \emph{entire interior} of $\Delta$ (see Fig. \ref{fig1} for illustration).

\begin{figure}[h]
\label{fig1}
\begin{center}
\resizebox{350pt}{!}{\includegraphics[scale=1.5]{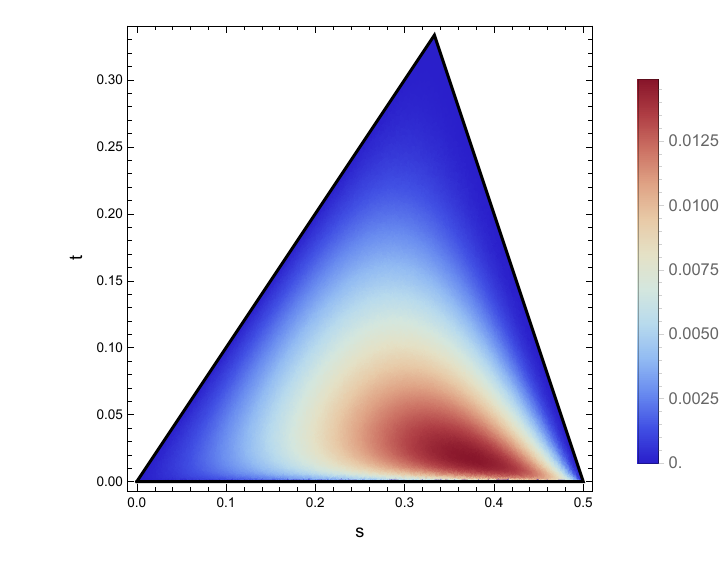}}
\caption{\label{fig1}Graph of $W(\rho,\tau) - W(\rho,\sigma) - W(\sigma,\tau)$ given by formulas (\ref{dist1}--\ref{dist3}) on the admissible parameter space $\Delta = \{ (s,t) \ | \ 0 \leq t \leq s \leq \tfrac{1}{2}, \ 2s+t \leq 1 \}$. The map vanishes on $\partial \Delta$ and is positive on $\Delta^\circ$.}
\end{center}
\end{figure}

Although the above states are 3-dimensional, they of course provide counterexamples for any $n > 3$ as well --- simply consider $\rho \oplus \textbf{0}_{n-3}$, $\sigma \oplus \textbf{0}_{n-3}$ and $\tau \oplus \textbf{0}_{n-3}$.

In order to prove (\ref{dist1}), notice first that the classical couplings of $P := (1-s,s,0)$ and $Q := (s,1-s-t,t)$ can be written as
\begin{align*}
\gamma = \begin{bmatrix} \lambda & 1-s-t-\lambda+\mu & t-\mu \\ s-\lambda & \lambda-\mu & \mu \\ 0 & 0 & 0 \end{bmatrix},
\end{align*}
where $\lambda,\mu$ are parameters. The conditions $\gamma_{ij} \in [0,1]$ can be shown to restrict the parameter space to $\lambda \in [0,s]$, $\mu \in [0,\min\{\lambda,t\}]$. Formula (\ref{wzor}) now gives
\begin{align*}
W^2(\rho,\sigma) & = \tfrac{1}{2} \min_{\lambda,\mu} \left[\left( \sqrt{1-s-t-\lambda+\mu} - \sqrt{s-\lambda} \right)^2 + t \right]
\\
& = \tfrac{1}{2} \min_{\lambda} \left[\left( \sqrt{1-s-t-\lambda} - \sqrt{s-\lambda} \right)^2 + t \right] = \tfrac{1}{2} \left[\left( \sqrt{1-s-t} - \sqrt{s} \right)^2 + t \right],
\end{align*}
where we have first used the fact that the function under $\min_{\lambda,\mu}$ is nondecreasing in $\mu$ and so we could simply plug $\mu = 0$, and then the fact that the function under $\min_{\lambda}$ is nondecreasing in $\lambda$ and so attains its minimum at $\lambda = 0$ --- the lowest admissible parameter values. This yields (\ref{dist1}).

Proof of (\ref{dist2}) goes along similar lines. The classical couplings of $P := (1-s,s,0)$ and $R := (t,1-s-t,s)$ can be written as
\begin{align*}
\gamma = \begin{bmatrix} \lambda & 1-2s-\lambda+\mu & s-\mu \\ t-\lambda & s-t+\lambda-\mu & \mu \\ 0 & 0 & 0 \end{bmatrix},
\end{align*}
where the conditions $\gamma_{ij} \in [0,1]$ this time restrict the parameter space to $\lambda \in [0,t]$, $\mu \in [0,s-t+\lambda]$. Formula (\ref{wzor}) allows us to write
\begin{align*}
W^2(\rho,\tau) & = \tfrac{1}{2} \min_{\lambda,\mu} \left[\left( \sqrt{1-2s-\lambda+\mu} - \sqrt{t-\lambda} \right)^2 + s \right] = \tfrac{1}{2} \left[\left( \sqrt{1-2s} - \sqrt{t} \right)^2 + s \right],
\end{align*}
where again we have first noticed that the function under $\min$ is nondecreasing in $\mu$ and therefore plugged $\mu = 0$ and then done the same with respect to $\lambda$. This yields (\ref{dist2}).

Proving (\ref{dist3}) requires a bit more care. Instead of parametrizing the couplings $\gamma$ of $Q$ and $R$ (what would require as much as 4 parameters), let us first notice that the map $\Phi(\gamma) := \sum_{1 \leq i<j \leq 3} (\sqrt{\gamma_{ij}} - \sqrt{\gamma_{ji}} )^2$ can be bounded from below as
\begin{align*}
\Phi(\gamma) & \geq (\sqrt{\gamma_{12}} - \sqrt{\gamma_{21}} )^2 + (\sqrt{\gamma_{13}} - \sqrt{\gamma_{31}} )^2 = \sum_{j=1}^3 (\sqrt{\gamma_{1j}} - \sqrt{\gamma_{j1}} )^2 
\\
& \geq \left( \sqrt{\textstyle\sum_{j=1}^3 \gamma_{1j}} - \sqrt{\textstyle\sum_{j=1}^3 \gamma_{j1}}\right)^2 = \left(\sqrt{s} - \sqrt{t}\right)^2,
\end{align*} 
where in the first inequality we simply omitted a (nonnegative) summand, in the second inequality used Cauchy--Schwarz inequality, and in the last equality we substituted the appropriate values of the marginals. This means that, by (\ref{wzor}),
\begin{align*}
W^2(\sigma,\tau) = \tfrac{1}{2} \min_{\gamma \in \Gamma_{\textup{cl}}(Q,R)} \Phi(\gamma) \geq \tfrac{1}{2}\left(\sqrt{s} - \sqrt{t}\right)^2.
\end{align*}
But this lower bound is actually attained by the coupling 
\begin{align*}
\gamma_\ast := \begin{bmatrix} 0 & 0 & s \\ 0 & 1-s-t & 0 \\ t & 0 & 0 \end{bmatrix},
\end{align*}
what completes the proof of (\ref{dist3}).

It remains to show that $W(\rho,\tau) > W(\rho,\sigma) + W(\sigma,\tau)$ for all $(s,t) \in \Delta^\circ$, i.e., that
\begin{align}
\label{dist4}
\sqrt{1+t-s - 2\sqrt{t(1-2s)}} > \sqrt{1 - 2\sqrt{s(1-s-t)}} + \sqrt{s}-\sqrt{t}
\end{align}
for all $0 < t < s < \tfrac{1}{2}$ and $2s+t < 1$. To this end, introduce $r := \tfrac{\sqrt{t}}{\sqrt{s}}$ and $a := \tfrac{\sqrt{1-2s} - \sqrt{t}}{\sqrt{s}}$ and notice that (\ref{dist4}) can be equivalently expressed as
\begin{align}
\label{dist5}
\sqrt{1+a^2} + r - 1 > \sqrt{\left(\sqrt{1+a^2+2ar}-1\right)^2+r^2}
\end{align}
for all $a>0$ and $0 < r < 1$. Since both sides of (\ref{dist5}) are positive we can square them, and after some cancellations obtain the following equivalent form of the desired inequality
\begin{align}
\label{dist6}
(1-r)\sqrt{1+a^2} + r(1+a) < \sqrt{1+a^2+2ar},
\end{align}
which is indeed true for all $a>0$ and $0 < r < 1$ by the strict concavity of the function $f(r) := \sqrt{1+a^2+2ar}$ on the interval $[0,1]$.


\section*{Acknowledgements}

The counterexamples were found with the assistance of \emph{ChatGPT 5.4 Pro}.


\begin{thebibliography}{1}

\bibitem{QOT} Sh. Friedland, M. Eckstein, S. Cole, K. \.{Z}yczkowski, \textit{Quantum Monge--Kantorovich problem and transport distance between density matrices}, Phys. Rev. Lett. \textbf{129}, 110402 (2022)
\bibitem{QOT2} S. Cole, M. Eckstein, Sh. Friedland, K. \.{Z}yczkowski, \textit{On quantum optimal transport}, Math. Phys. Anal. Geom. 26, 14 (2023)
\bibitem{Triangle} T. Miller, R. Bistro\'{n}, \textit{Distances between pure quantum states induced by a distance matrix}, \url{arXiv:2509.14727 [math-ph]} (2025)

\end{thebibliography}
\end{document}